\tikzstyle{vert}=[circle,draw,minimum size=12pt,inner sep=0pt, text=black!50,line width=0.5pt]
\tikzstyle{edge} = [-]
\newcommand{\ie}{\emph{i.e.}}
\newcommand{\eg}{\emph{e.g.}}
\newcommand{\eq}{\leftarrow}
\newtheorem{proposition}{Proposition}
\newtheorem{lemma}{Lemma}
\title{Maximum-expectation matching under recourse}
\author{João Pedro Pedroso and Shiro Ikeda}
\date{May 24, 2016}
\begin{document}
\maketitle

\begin{abstract}
This paper addresses the problem of maximizing the expected size of a matching in the case of unreliable vertices and/or edges.  The assumption is that upon failure, remaining vertices that have not been matched may be subject to a new assignment.  This process may be repeated a given number of times, and the objective is to end with the overall maximum number of matched vertices.

The origin of this problem is in kidney exchange programs, going on in several countries, where a vertex is an incompatible patient-donor pair; the objective is to match these pairs so as to maximize the number of served patients.  A new scheme is proposed for matching rearrangement in case of failure, along with a prototype algorithm for computing the optimal expectation for the number of matched vertices.  

Computational experiments reveal the relevance and limitations of the algorithm, in general terms and for the kidney exchange application.
\end{abstract}

\section{Introduction}
\label{sec:introd}

Algorithms for matching have recently raised interest as a research topic on a particular application: maximizing the number of transplants in kidney exchange programs.  These programs have been organized in several countries, in order to provide patients with kidney failure with an alternative to traditional treatment, in cases where they have a donor willing to provide a kidney, but the pair is not physiologically compatible~\citep{Klerk05,Biro09,Saidman06,Segev05}. These programmes are based on the concept of \emph{exchange} between two patient-donor pairs: donors are allowed to provide a kidney to the other pair's patient, if compatibility exists, so that both patients benefit.

Figure \ref{fig:swap} (left) illustrates the simplest case with only two pairs, $(P_1, D_1)$ and $(P_2, D_2)$.  Donor $D_1$ of the first pair is allowed to give a kidney to patient $P_2$ of the second pair, and patient $P_1$ may get a kidney from donor $D_2$. These graphs concern only preliminary compatibilities, which must be reassessed prior to actual transplant, by confirming the availability of the intervening persons and through additional medical exams.

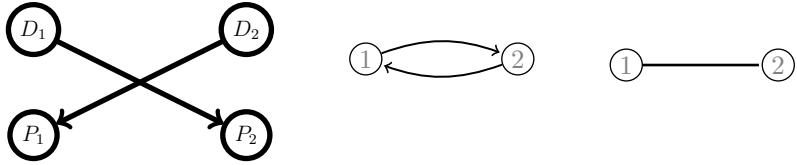
\begin{figure}[!htbp]
  \centering
  \hfill{}
  \begin{tikzpicture}[baseline={(current bounding box.center)}, thick, scale=.7, transform shape]
    \path 
    (-2.0, 1.0) node (1) [line width=2pt,shape=circle,draw,label=below:] {\large{$D_1$}}
    (-2.0,-1.0) node (2) [line width=2pt,shape=circle,draw,label=below:] {\large{$P_1$}} 
    ( 2.0, 1.0) node (3) [line width=2pt,shape=circle,draw,label=above:] {\large{$D_2$}} 
    ( 2.0,-1.0) node (4) [line width=2pt,shape=circle,draw,label=left :] {\large{$P_2$}} 
    ;
    \draw [line width=2pt, ->] (1) to (4);
    \draw [line width=2pt, ->] (3) to (2);
  \end{tikzpicture}
  \hfill{}
  \begin{tikzpicture}[shorten >=1pt, scale=1.0, line width=0.75pt, ->]
    \foreach \name/\x/\y in {1/0/0, 2/2/0}
    \node[vert] (G-\name) at (\x,\y) {$\name$};
    
    \foreach \from/\to in {1/2,2/1}
    { \draw  (G-\from) to [bend left=20] (G-\to); }
  \end{tikzpicture}
  \hfill{}
  \begin{tikzpicture}[shorten >=1pt, scale=1.0, line width=1pt]
    \foreach \name/\x/\y in {1/0/0, 2/2/0}
    \node[vert] (G-\name) at (\x,\y) {$\name$};
    
    \begin{scope}[every node/.style={scale=.667}]
      \path (G-1) edge [right] (G-2);
    \end{scope}
  \end{tikzpicture}
  \hfill{}
  \caption{An exchange between two incompatible pairs; arcs represent a preliminary assessment compatibility, and arrows define a possible exchange (left).  Representation of this situation as a directed graph (center), and as a graph where an edge stands for a pair of opposite arcs between two nodes (right).}
  \label{fig:swap}
\end{figure}

The situation depicted in Figure~\ref{fig:swap} may be extended to cycles with more than two vertices; typical programs may allow two to five vertices in an exchange, though smaller cycles are preferable due hospital logistics, and other reasons.  The optimization problem underlying a standard kidney exchange usually considers maximizing the number of transplants~\citep{Klerk05,Segev05}, though other criteria have been proposed; \eg, the overall weight assigned to each transplant~\citep{Li2014,Manlove2015}, and the expected number of transplants for selected cycles~\citep{pedroso2014lncsCOA} or subsets of vertices~\citep{KPV}.  The deterministic version of this problem, where all the elements of the graph are assumed reliable, is naturally modeled through integer optimization; a summary of models for the kidney exchange problem has been presented and analyzed by~\citet{Constantino2013}.

Throughout this paper we will only consider the case of cycles of two vertices; in such a case, the directed graph representing an instance may be simplified, by replacing opposite arcs between two nodes by an edge, and ignoring all the other arcs.  The relevant problem is maximum-weighted matching in a graph.  

Algorithms for finding a maximum matching in a graph have been proposed in the 1960's by~\citet{Edmonds1965a}, based on some properties of maximal matchings by~\citet{Berge1957}.  Maximum-weighted matching in a graph has been studied in~\citet{Edmonds1965b}, who proposed an algorithm involving a formulation of the problem as a linear program, linear programming duality, and the previous algorithm for maximum matching.  The complete algorithm is polynomial, solving the weighted matching problem in $O(n^4)$ time.  An analysis of efficient algorithms and data structures for this problem is presented in~\citet{Galil1986}.

Here we study this problem under uncertainty, in a situation where vertices and/or edges may fail.  This problem has been initially raised in~\citet{Li2011} and \citet{chen2012}, which propose a simulation system for maximizing the expected utility when arcs are subject to failure, in dynamic version of the problem.  Reconfiguration of the solution after a failure is observed has been addressed in~\citep{Manlove2015}, where the number of effective 2-cycles is taken into account in the weights assigned to 3-cycles. Maximization of the expectation of the number of transplants, considering internal reconfiguration within a cycle's vertices, has been considered in~\citet{pedroso2014lncsCOA}.  Reconfiguration involving vertices outside a cycle, in what has been called \emph{subset recourse}, was considered in ~\citet{KPV}.

Here, we take a more general view: we consider that after a failure, any recourse solution may be chosen as long as it does not involve previously matched vertices.  Recourse may be repeated an undetermined number of times, until the \emph{residual graph} has no edges available.  This is relevant in practice, as there are no natural obstacles limiting recourse to a single trial.

This problem seems to be inherently intractable; we provide some examples, discuss how to specify and compute a solution, and analyze the behavior of the proposed algorithm with some instances of a real-world situation where it can be applied, the kidney exchange problem.

Our contributions are the following.  In Section~\ref{sec:descr} we model and analyze matching with repeated recourse.  An algorithm for tackling this problem is provided in Section~\ref{sec:alg}, and its behavior is experimentally assessed in Section~\ref{sec:results}.  We conclude with Section~\ref{sec:conclusions}, which presents considerations on the applicability and limitations of the approach proposed.

\section{Preliminaries and problem description}
\label{sec:descr}

The input to our problem is a graph $G = (V,E)$.  A matching $M$ in $G$ is a subset of $E$ such that no two edges in $M$ have a vertex in common.  A maximal matching $M$ has the property that no edge can be added to $M$ (\ie, $M$ is not a proper subset of any other matching in $G$). 

Vertices $i \in V$ may fail with probability $p_i$, and edges $\{i,j\}$ (also denoted~$ij$) may fail with probability $p_{ij}$.  For the sake of simplicity, except if otherwise stated we will assume that only edges fail.  

In our setting, after a matching is proposed, the matched edges and vertices are \emph{observed}; a matched edge \emph{succeeds} if neither the edge nor the incident vertices fail.  If a matched edge succeeds, the incident vertices are ``served'' and the edge will no longer be considered.  If a matched edge fails, the incident vertices are not served; the failure is permanent, and hence that edge is no longer considered either.  The search for \emph{a posteriori} attempts to serve more vertices is called \emph{recourse}; it is similar to the original problem, except that is may not involve vertices that have been served already.

After each observation, the graph is updated into a \emph{residual graph}.  Edges are eliminated at each observation as follows:
\begin{itemize}
\item successful edges, and all edges at their endvertices, are removed;
\item edges that failed are removed;
\item edges at vertices that failed are removed.
\end{itemize}
If a proposed matching is maximal and in the observation there are no failures, the residual graph will contain no edges.

The objective is to find a matching such that the \emph{expectation} for its cardinality (or weight) is maximum.  The basis for the efficiency of algorithms for maximum-weighted matching is that a maximum matching is also \emph{maximal}.  Unfortunately, this property does not hold in our context.

\begin{proposition}
  Under recourse, a maximum expectation matching may be non-maximal.
\end{proposition}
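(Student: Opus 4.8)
The plan is to prove the statement by exhibiting an explicit instance on which the optimal (maximum-expectation) matching is non-maximal. The smallest convincing candidate is the $4$-cycle $C_4$ on vertices $\{1,2,3,4\}$ with edges $12,23,34,41$, where every edge fails independently with the same probability $p\in(0,1)$. Its only maximal matchings are the two perfect matchings $\{12,34\}$ and $\{23,41\}$, which by the symmetry of the instance have equal expected value; I would compare these against the non-maximal single-edge matching $\{12\}$.

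First I would compute the expected number of served vertices of a perfect matching, say $\{12,34\}$, by conditioning on the four success/failure outcomes of its two edges and applying the residual-graph rules. The only branch needing recourse is the one where both edges fail: the complementary edges $23$ and $41$ then survive (they are incident neither to a served nor to a failed vertex) and form two independent edges, on which the obvious recourse yields expected value $4(1-p)$. Collecting the cases gives a closed form for $E[\{12,34\}]$, a cubic in $p$.

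Next I would compute $E[\{12\}]$ for the non-maximal matching. If $12$ succeeds, vertices $1,2$ are served, the edges $12,23,41$ vanish, and the residual is the single edge $34$, on which recourse is immediate. The delicate branch is $12$ failing: the residual is the path $2\!-\!3\!-\!4\!-\!1$, and the recourse value is itself an optimization over continuations. I would show that proposing the two end edges $\{23,41\}$ dominates the alternative of proposing the middle edge $34$ (a short sub-computation gives a gap of $2(1-p)^3>0$), which fixes the optimal recourse value and hence $E[\{12\}]$.

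Finally I would subtract the two expressions; the difference simplifies to $E[\{12\}]-E[\{12,34\}]=2p(1-p)^3$, strictly positive for $p\in(0,1)$ (a quick check at $p=\tfrac12$ gives $2.625$ against $2.5$). Since $\{12\}$, and by symmetry every single edge, strictly beats both perfect matchings, and these are the only maximal matchings, the maximum-expectation matching on $C_4$ is attained at a non-maximal matching, which proves the claim. I expect the main obstacle to be the bookkeeping of the nested recourse—in particular verifying that the claimed continuation is genuinely optimal in the path branch, and confirming (here via symmetry) that the single edge beats \emph{every} maximal matching rather than just one of them.
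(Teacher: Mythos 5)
Your proof is correct and takes essentially the same route as the paper: the same $C_4$ counterexample, comparing the perfect matching $\{\{1,2\},\{3,4\}\}$ against the non-maximal single edge $\{\{1,2\}\}$, with the recourse cases worked out in the same way. Your final gap $2p(1-p)^3$ is exactly the uniform-probability specialization of the paper's general difference $-2\,p_{12}(1-p_{14})(1-p_{23})(1-p_{34})$ (the paper allows heterogeneous edge probabilities and is terser about verifying optimality of the recourse continuation in the path branch and about ruling out the second maximal matching, which you handle by symmetry).
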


\begin{proof}
  By counterexample.  Consider the  graph $C_4$, where the labels on edges correspond the their probability of failure.
  \begin{center}
    \begin{tikzpicture}[shorten >=1pt, scale=1.0, line width=1pt]
      \foreach \name/\x/\y in {1/0/0, 2/0/1, 3/1/1, 4/1/0}
      \node[vert] (G-\name) at (\x,\y) {$\name$};
      
      \begin{scope}[every node/.style={scale=.667}]
      \path (G-1) edge [left]   node {$p_{12}$} (G-2)
            (G-2) edge [above]  node {$p_{23}$} (G-3)
            (G-3) edge [right]  node {$p_{34}$} (G-4)
            (G-4) edge [below]  node {$p_{14}$} (G-1);
      \end{scope}
    \end{tikzpicture}
  \end{center}

  Assuming unitary weights on each edge, the expression of the expectation for the weight of the matching $\{\{1,2\}, \{3,4\}\}$ is the following:
  \begin{alignat*}{27}
    & 4 (1-p_{12}) (1-p_{34}) + 2 (1-p_{12}) p_{34} + 2 (1-p_{34}) p_{12} \\
    & + p_{12} p_{34} (4 (1-p_{23}) (1-p_{14}) + 2 (1-p_{14}) p_{23} + 2 (1-p_{23}) p_{14}).
  \end{alignat*}
  The expression of the expectation for the weight of the non-maximal matching $\{\{1,2\}\}$ is the following:
  \begin{alignat*}{27}
    (1-p_{12}) (2 + 2 (1-p_{34})) + p_{12} (& 4 (1-p_{23}) (1-p_{14}) + 2 p_{23} (1-p_{14})\\
                                            & + 2 p_{14} (1-p_{23}) + p_{23} p_{14} (2 (1-p_{34}))).
  \end{alignat*}
  The difference between the former and the latter expressions is:
  $$-2 p_{12} (1 - p_{14}) (1 - p_{23}) (1 - p_{34}),$$
  which is negative or zero.
\end{proof}

Actually, if there is no limit on the number of observations allowed, there is no interest in taking matchings with more than one edge.

\begin{proposition}
  With no limit on the number of observations allowed, there is a maximum-expectation matching with one edge chosen per each observation.
\end{proposition}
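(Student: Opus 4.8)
The plan is to introduce a value function and argue by induction on the number of edges. Let $f(G)$ denote the maximum expected matched weight obtainable from a graph $G$ under unlimited recourse. Since a successful edge removes its two endvertices together with all incident edges, and a failed edge is removed outright, every observation deletes at least the edges proposed; hence the residual graph strictly loses edges at each step, the recourse process terminates after at most $|E|$ observations, and $f$ is well defined by backward induction with $f(G)=0$ whenever $G$ has no edges. The value function obeys the recursion
\[
  f(G) = \max_{M}\ \EXPECT\bigl[\, w(M^{\mathrm{succ}}) + f(G') \,\bigr],
\]
where $M$ ranges over matchings of $G$, $M^{\mathrm{succ}}$ is the random set of edges of $M$ that succeed, and $G'$ is the resulting residual graph. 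The goal is to show that this maximum is always attained by a matching consisting of a single edge; unfolding the recursion then yields an optimal strategy that proposes one edge per observation.

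For the inductive step I would take a matching $M$ with $|M|\ge 2$, single out an edge $e\in M$, and set $M'=M\setminus\{e\}$. Conditioning on the outcome of $e$, I would compare proposing $M$ in one shot against proposing $e$ alone. Because the edges of a matching are vertex-disjoint, the trial of $e$ does not interfere with the trials of $M'$, and the deletions caused by $e$ commute with those caused by $M'$. If $e$ succeeds, its endvertices are served and removed, and the contribution of $M'$ to the one-shot continuation equals the value of the policy ``propose $M'$ in $G-V(e)$, then recourse optimally'', which is at most $f(G-V(e))$ by maximality of $f$. If $e$ fails, $e$ is deleted and the contribution of $M'$ equals the value of ``propose $M'$ in $G-e$, then recourse optimally'', which is at most $f(G-e)$. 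Averaging over the outcome of $e$ gives
\[
  \EXPECT\bigl[\, w(M^{\mathrm{succ}}) + f(G') \,\bigr] \ \le\ (1-p_e)\bigl( w_e + f(G-V(e)) \bigr) + p_e\, f(G-e),
\]
and the right-hand side is exactly the value of proposing the single edge $e$. Hence the maximum defining $f(G)$ need only range over single edges.

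The hard part is making the dominance precise in the failure branch. The key observation is that splitting one edge off $M$ can only enlarge the set of available recourse options: the one-shot proposal has already ``spent'' the edges of $M'$ on simultaneous trials, so when $e$ fails those edges are gone regardless of how they turned out, whereas the single-edge proposal keeps them fresh in $G-e$, where the optimal recourse $f(G-e)$ may still reuse them. Turning this intuition into the inequality above requires carefully matching the two residual graphs produced on each branch and invoking the elementary monotonicity that ``propose $M'$, then continue optimally'' is merely one feasible policy and is therefore bounded above by $f$. Note finally that the unlimited-observation assumption is exactly what makes this splitting costless: spreading the edges of $M$ across several observations never violates any bound on the number of rounds, which is precisely why the same conclusion would fail under a fixed observation limit.
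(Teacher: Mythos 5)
Your proof is correct, and its core insight is the same one the paper relies on; the difference is in how much is made explicit. The paper disposes of this proposition in a single sentence --- any solution using several edges in one step ``is still allowed with one edge per observation'' --- which is a pure feasibility/simulation claim at the level of policies: under unlimited observations, one-edge-per-observation policies can emulate any policy. What that one-liner leaves implicit is exactly what you prove. First, that the emulation loses no expected value: this needs your observation that trials on vertex-disjoint edges are independent and that the deletions they cause commute, so the residual graph after sequential proposals matches the residual graph after the one-shot proposal. Second, that the extra adaptivity gained by splitting can only help: this is your monotonicity step, bounding the value of ``propose $M'$, then recourse optimally'' by $f(G-V(e))$ or $f(G-e)$ on the two branches. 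Your setup also supplies the well-definedness of the optimum (termination in at most $|E|$ observations, backward induction), which the paper never addresses, and it makes transparent why the conclusion breaks under a finite limit $N$ --- the splitting consumes observations --- a point the paper only handles informally in its discussion of $N$-recourse. The trade-off: the paper's argument is the minimal observation needed and is arguably sufficient for the reader, while yours is a self-contained dominance proof, branch by branch, that would survive scrutiny in a setting where the simulation argument is less obviously innocuous (e.g., with vertex failures, where the same conditioning argument still goes through because matched edges remain vertex-disjoint).
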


\begin{proof}
  Any solution with more than one edge in a given step is still allowed with one edge per observation.
\end{proof}

The previous example puts in evidence that the maximum-expectation matching for that graph is one of $\{\{1,2\}\}$,  $\{\{2,3\}\}$,  $\{\{3,4\}\}$,  or $\{\{1,4\}\}$.  It becomes also clear that the complete specification of a solution involves the statement of the choices in the initial decision, followed by the choices at the second level (after the first observation), and so on, until the set of edges in the residual graph becomes empty.  In this example, it would be (assuming that the optimum is $\{\{1,2\}\}$):
\begin{description}
\item[At level 1:] the matching $\{\{1,2\}\}$;
\item[At level 2:] ~
  \begin{itemize}
  \item If $\{\{1,2\}\}$ succeeds in the observation: the matching $\{\{3,4\}\}$
  \item If $\{\{1,2\}\}$ fails in the observation: one must find the maximum-expectation matching in the residual graph
  \begin{center}
    \begin{tikzpicture}[shorten >=1pt, scale=1.0, line width=1pt]
      \foreach \name/\x/\y in {1/0/0, 2/0/1, 3/1/1, 4/1/0}
      \node[vert] (G-\name) at (\x,\y) {$\name$};
      
      \begin{scope}[every node/.style={scale=.667}]
      \path (G-2) edge [above]  node {$p_{23}$} (G-3)
            (G-3) edge [right]  node {$p_{34}$} (G-4)
            (G-4) edge [below]  node {$p_{14}$} (G-1);
      \end{scope}
    \end{tikzpicture}
  \end{center}
  This graph has four possible matchings:  $\{\{1,4\},\{2,3\}\}$ (maximal), or one of the non-maximal matchings $\{\{1,4\}\}$, $\{\{2,3\}\}$, or $\{\{3,4\}\}$.  It turns out that all of them are equivalent, with expectation
  $$p_{14} (2 + 2 p_{34}) + (1 - p_{14}) (2 - 2 (1 - p_{23}) (1 - p_{34})).$$
  For example, if at level 2 one chooses $\{\{1,4\},\{2,3\}\}$, then level 3 would be
  \begin{itemize}
  \item If both $\{1,4\}$ and $\{2,3\}$ succeed in the observation, then the residual graph will have no edges;
  \item If $\{1,4\}$ succeeds and $\{2,3\}$ fails in the observation, then the residual graph will have no edges;
  \item If $\{1,4\}$ fails and $\{2,3\}$ succeeds in the observation, then the residual graph will have no edges;
  \item If both $\{1,4\}$ and $\{2,3\}$ fail in the observation, then the residual graph will be
  \begin{center}
    \begin{tikzpicture}[shorten >=1pt, scale=1.0, line width=1pt]
      \foreach \name/\x/\y in {1/0/0, 2/0/1, 3/1/1, 4/1/0}
      \node[vert] (G-\name) at (\x,\y) {$\name$};
      
      \begin{scope}[every node/.style={scale=.667}]
      \path (G-3) edge [right]  node {$p_{34}$} (G-4);
      \end{scope}
    \end{tikzpicture}
  \end{center}
  and the obvious choice at level 4 would be the matching $\{\{3,4\}\}$.
\end{itemize}
\end{itemize}
\end{description}

The previous example shows that a complete specification of a solution involves predicting all the possible scenarios, and recursively finding the optimum for each of them.  This problem is not in the nondeterministic polynomial time ($\NP$) complexity class of the underlying maximum-weighted matching problem.  Its recursive nature also prevents its classification in $\PSPACE$, as for a matching with $n$ edges one must check the $2^n$ failure combinations; therefore, it is an intractable problem.  

\begin{lemma}
  Max-expectation matching is not in \EXPSPACE.
\end{lemma}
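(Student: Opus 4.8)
The plan is to read \emph{max-expectation matching} not as the decision problem of returning the optimal expected cardinality, but as the \emph{function} problem implicit in the discussion above: emitting a complete optimal recourse policy, i.e.\ the full contingency tree that, for every reachable residual graph, prescribes the matching to propose. The route to ``not in \EXPSPACE'' is then through the sheer size of the solution itself. A complete policy must be written out explicitly, scenario by scenario, and I would argue that its description length outgrows what any computation confined to exponential space can emit; hence no such machine can even produce the required object, let alone optimise it.

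First I would formalise the policy as a rooted tree whose root is $G$, in which a node labelled by a chosen matching $M$ with $|M|=k$ has exactly $2^k$ children, one per success/failure pattern of the $k$ matched edges, and whose leaves are the edgeless residual graphs; the label at each node is the matching achieving the maximum expectation for the corresponding residual graph, exactly as traced through in the $C_4$ example. By the proposition above stating that a single edge per observation suffices for optimality, I may restrict attention to binary branching, at the cost of a recursion whose depth can reach $|E|$ along an ``always-fail'' path; alternatively, retaining maximal matchings trades depth for wider branching. The substantive step is then to lower-bound the number of leaves by exhibiting an explicit family of instances, built by iterating the $C_4$-type gadget of the first proposition so that every failure still leaves a residual graph rich enough to force further nontrivial recourse, and to argue that the distinct scenario paths proliferate beyond any bound attainable in exponential space.

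The main obstacle is precisely this lower bound on description length, and I expect it to be the crux of the whole statement. Along any root-to-leaf path the matched edges are consumed monotonically, so $\sum_i |M_i| \le |E|$ on each path; a naive count then caps the number of leaves at roughly $2^{|E|}$, which is only single-exponential, and, worse, the at most $2^{|E|}$ distinct residual subproblems admit a \emph{directed acyclic} representation that shares repeated subtrees and thus fits comfortably in exponential space. The argument must therefore show two things that resist the naive bound: that the output format genuinely demanded by the problem forbids such sharing, and that the compounding of the $2^{|M|}$ scenario branching across the levels of recourse really escapes every exponential bound rather than collapsing to $2^{\mathrm{poly}(|G|)}$. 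This is the delicate point on which the claim stands or falls, and it is where an explicit adversarial instance family, rather than a generic counting argument over residual graphs, would have to carry the proof.
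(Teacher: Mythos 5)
You do not prove the lemma, and you say so yourself: everything rests on a lower bound---that an optimal recourse policy cannot be produced within exponential space---which you leave open. In fact, your own observations show that no such bound exists along this route. Along any root-to-leaf path each edge of $G$ is proposed at most once, so a policy tree over a graph with $|E|$ edges has at most $2^{|E|}$ leaves (by induction: a node proposing a matching $M$ has $2^{|M|}$ children, each over a residual graph with at most $|E|-|M|$ edges). Thus even the fully expanded, unshared tree has only single-exponential size; and if sharing is allowed, the policy is just a table indexed by the at most $2^{|E|}$ residual graphs, fillable by dynamic programming over residual graphs in exponential space and time, with all intermediate expectations being rationals of polynomial bit-length (every denominator divides the product of the edge-probability denominators). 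So under your function-problem reading the object to be output, and the computation producing it, always fit in exponential space; no adversarial family of $C_4$-gadgets can change that counting, and the approach cannot be completed. This is not a missing technical step but a route that provably dead-ends.

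The paper's own proof takes a different route, which you should see for the comparison: it never discusses output size, but counts the steps of the sequential decision process---with $m$ edges, $2m$ cases at the first choice, $2(m-1)$ at the second, and so on, hence $2^m\,m!$ steps in the worst case---and asserts that this quantity ``is not bounded by an exponential in $m$.'' Note that this argument falls to precisely the objections you raise against yourself: $2^m\,m!\le 2^{m+m\log_2 m}$, which is $2^{\mathrm{poly}(m)}$; the step count of one particular enumeration is an upper bound on that enumeration, not a lower bound on the problem; and even a genuinely super-exponential \emph{time} count would not exclude membership in \EXPSPACE, which is a space class (indeed the paper's evaluation procedure, Algorithm~\ref{alg:eval}, memoizes subproblems, implementing exactly the sharing you describe). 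So your proposal contains a genuine, self-identified gap and does not establish the statement; but the critical analysis inside it is sound, and it applies with equal force to the paper's argument, which does not establish the statement either.
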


\begin{proof}
  As there is no advantage in choosing multiple edges simultaneously, we assume that they are chosen sequentially.  The contribution of an edge to the value of the expectation of a matching is not known \emph{a priori}, as it will depend on possible rearrangements with other edges.  Consider a graph with $m$ edges; in the worst case, we will need $m$ steps for selecting a matching.  For each selected edge, we must recursively check what happens if it succeeds and if it fails.  Hence, in the first choice there are $2 m$ cases to analyse; in the second choice, there are $2 (m-1)$ cases, and so on.  The number of steps in the worst case is therefore $2^m m!$, which is not bounded by an exponential in $m$.
\end{proof}

We used the notation of~\citet{papadimitriou1994} for complexity classes.
Given this property, there is little hope of solving large instances. However, an algorithm for tackling this problem may still be useful for small real-world cases.  Besides, limiting the number of observations allowed, as described in the next section, may allow improvements in the actual computational time required.  The actual CPU time needed for solving a set of benchmark instances will be analyzed in Section~\ref{sec:results}.

\subsection{Limited recourse}
\label{sec:limit}

In most situations, there is a limit in the allowed number of observations and recourse reconfigurations.  We call \emph{$N$-recourse} to a matching problem where the solution must be reached within $N$ observations.  If the limit $N$ is zero, the problem falls back to standard matching, maximizing expectation based on failure probabilities, but without recourse; $N=\infty$ corresponds to the unlimited case previously described.  

The difficulty of solving an $N$-recourse problem increases with $N$, being solvable in polynomial time for $N=0$.
Therefore, in terms of complexity, 0-recourse is in \P~(using, \eg, the algorithm proposed by~\citet{Edmonds1965a}), with complexity increasing with $N$, and $\infty$-recourse being intractable, as shown before.

A practical approach to solving this problem consists of obtaining an initial solution for $N=0$, and then incrementing the number of allowed observations $N$ until the additional gain is considered acceptably low, or until the computational time becomes excessive.

\section{Algorithm}
\label{sec:alg}

The basis for the method that we propose is the enumeration of all the matchings in a graph.  An interesting algorithm for enumerating all the minimum-cost perfect matchings (where the vertices in the graph are matched) has been proposed in~\citet{fukuda1992} for the case of bipartite graphs; to the best of our knowledge, there is no equivalent algorithm for more general cases. Algorithm~\ref{alg:matching} proposes a very simple recursive procedure for enumerating all the matchings in a graph.  
\begin{algorithm}[h!tbp]
  \begin{footnotesize}
    \DontPrintSemicolon
    \SetKwFunction{algo}{algo}\SetKwFunction{ConnectedComponents}{ConnectedComponents}
    \SetKwFunction{algo}{algo}\SetKwFunction{EvaluateMatching}{EvaluateMatching}
    \SetKwFunction{algo}{algo}\SetKwFunction{Matchings}{Matchings}
    \SetKwFunction{algo}{algo}\SetKwFunction{Solve}{Solve}
    \KwData{graph:
      \begin{itemize}[nosep]
      \item graph $G=(V,E)$ with edges remaining for enumeration (initially, original graph);
      \item matching $m$ currently under construction (initially empty);
      \item current set of matchings $M$ (initially empty);
      \end{itemize}
    }
    \KwResult{
      \begin{itemize}[nosep]
      \item list of all matchings in $G=(V,E)$.
      \end{itemize}
    }
    \SetKwProg{myproc}{procedure}{}{}
    \SetKw{Break}{break}
    \SetKw{Continue}{continue}
    \myproc{\Matchings{$V, E, m, M$}}{
      \If{$E = \emptyset$}{\Return $M$}
      $ij \eq $ arbitrary edge from $E$\;
      $m' \eq m \cup \{ij\}$\;
      $M \eq M \cup \{m'\}$\;
      $E' \eq \{ab \in E : \{a,b\} \cap \{i,j\} = \emptyset\}$\;
      \Matchings{$V, E', m', M$} \tcp*[f]{case 1: add $ij$ to current matching}\;
      \Matchings{$V, E \setminus \{ij\}, m, M$} \tcp*[f]{case 2: don't add $ij$ to current matching}\;
      \Return{$M$}\;
    }
  \end{footnotesize}
  \caption{Algorithm for enumerating all matchings.}
  \label{alg:matching}
\end{algorithm}

Determining the maximum-expectation matching involves an indirect recursion between the main function \Solve, presented in Algorithm~\ref{alg:solve}, and function \EvaluateMatching, presented in Algorithm~\ref{alg:eval}.  The former starts by finding the connected components present in the graph; as the expectation for each of them is independent of the others, it may be computed separately.  Then, each matching in a given component is evaluated with \EvaluateMatching and the best of them is chosen.
\begin{algorithm}[h!tbp]
  \begin{footnotesize}
    \DontPrintSemicolon
    \KwData{
      \begin{itemize}[nosep]
      \item instance:
        \begin{itemize}[nosep]
        \item set of vertices $V$;
        \item set of edges $E$;
        \item probabilities of failure $p_i$ for vertices $i \in V$ and $p_{ij}$ for edges $ij \in E$;
        \end{itemize}
      \item limit of observations allowed $N$.
      \end{itemize}
    }
    \KwResult{
      \begin{itemize}[nosep]
      \item maximum-expectation value. 
      \end{itemize}
    }
    \SetKwProg{myproc}{procedure}{}{}
    \SetKw{Break}{break}
    \SetKw{Continue}{continue}
    \myproc{\Solve{$V, E, p, N$}}{
      $z^* \eq 0$\;
      \ForEach{$C \in \ConnectedComponents(V,E)$}{
        \lIf{$|C| = 1$}{\Continue}
        $(V',E') \eq $ subgraph induced on vertex set $C$\;
        $z \eq 0$\;
        \ForEach{$m \in \Matchings(V',E')$}{
          $R \eq E'$\;
          $z' \eq \EvaluateMatching(V',E',p,m,R,N)$\;
          \If{$z' \geq z$}{
            $z \eq z'$\;
          }
        }
        $z^* \eq z^* + z$\;
      }
      \Return{$z^*$}\;
    }
  \end{footnotesize}
  \caption{Algorithm for finding the maximum-expectation matching.}
  \label{alg:solve}
\end{algorithm}

The evaluation of a matching involves listing all the patterns of success or failure of its edges.  For each pattern, some bookkeeping is necessary for determining its probability of occurrence; this value is then multiplied by the number of edges for computing the associated contribution to the expectation.  However, edges which did not fail and which were not involved with success in the current pattern (stored in variable $R'$) are free for a rearrangement; this is the reason why \Solve is called inside \EvaluateMatching, for determining their contribution to the expectation under the current pattern.
\begin{algorithm}[h!tbp]
  \begin{footnotesize}
    \DontPrintSemicolon
    \KwData{
      \begin{itemize}[nosep]
      \item instance:
        \begin{itemize}[nosep]
        \item set of vertices $V$;
        \item set of edges $E$;
        \item probabilities of failure for vertices $p_i, i \in V,$ and for egdes $p_{ij}, \forall ij \in E$.
        \end{itemize}
      \item matching $m$;
      \item limit of observations allowed $N$.
      \end{itemize}
    }
    \KwResult{
      \begin{itemize}
      \item expectation considering all failure patterns.
      \end{itemize}
    }
    \SetKwProg{myproc}{procedure}{}{}
    \SetKw{Break}{break}
    \SetKw{Continue}{continue}
    \myproc{\EvaluateMatching{$V, E, p, m, R, N$}}{
      \lIf{$m,N$ was previously memoized}{\Return{$T_{mN}$}}
      $z \eq 0$\;
      \ForEach{$b \in $ binary patterns of size $|m|$}{
        $q \eq 1$\;
        $n \eq 0$\;
        \For{$k \leftarrow 1$ \KwTo $|m|$}{
          $ij \eq k^\text{th}$ edge of matching $m$\;
          \If(\tcp*[f]{edge $ij$ fails in this pattern}){$b_k = 0$}{
            $q \eq q \times p_{ij}$\;
            $R' \eq R \setminus \{ij\}$;
          }
          \Else(\tcp*[f]{edge $ij$ succeeds in this pattern}){
            $q \eq q \times (1-p_{ij})$\;
            $n \eq n + 1$\;
            $R' \eq \{ab \in R : \{a,b\} \cap \{i,j\} = \emptyset\}$\;
          }
        }
        \If{$R' \neq \emptyset$}{
          $z' \eq \Solve(V,R',p,N-1)$\;
          $z \eq z + q \times (2 n + z')$\;
        }
      }
      memoize $T_{mN} \eq z$\;
      \Return{$z$}
    }
  \end{footnotesize}
  \caption{Algorithm for evaluating a matching under recourse.}
  \label{alg:eval}
\end{algorithm}

\section{Results}
\label{sec:results}

Let us start by analyzing what differences may be expected between situations without and with recourse.
Figure~\ref{fig:expectC4K4} shows the expectation for the number of matched vertices for a graph with four edges, for varying probability of failure $p$ on edges (considered identical for all edges).  The graphs considered, with four vertices, are a cycle ($C_4$), and the complete graph ($K_4$).  With no recourse, the expectation is $4(1-p)^2$ for both graphs.  With recourse, we can observe a considerable improvement on graph $C_4$, and further improvements on $K_4$, especially for moderate probability of failure.  This first result motivates for the use of recourse.

\begin{figure}[h!tbp]
  \centering
  \scalebox{0.75}{\input{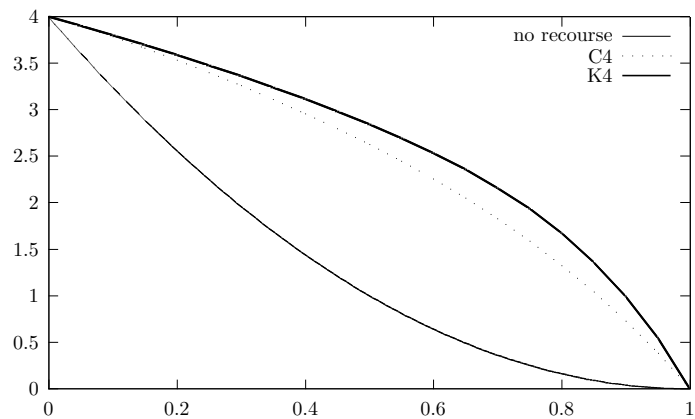}}
  \caption{Expectation for varying probability of failure on edges.}
  \label{fig:expectC4K4}
\end{figure}

\subsection{Results on general graphs}
\label{sec:resgen}

This section reports results for the application of the algorithm in the worst scenario: unlimited number of observations, on possibly dense graphs.  A more realistic experiment is presented in the next section.  As expected, the exponential growth of the time required for solving in terms of the number of edges is clearly observed in Figure~\ref{fig:cpu_graphs}, which plots the CPU time used as a function of the number of edges in the graph, for all graphs with up to six vertices.  The points almost overlap, showing that there is virtually no influence of the number of vertices on the CPU time used; this behavior will no longer be observed in the more realistic cases of next section, where graphs are relatively sparse and have special structure.

\begin{figure}[h!tbp]
  \centering
  \includegraphics[width=.40\textwidth,trim=0 0 0 0,clip=True]{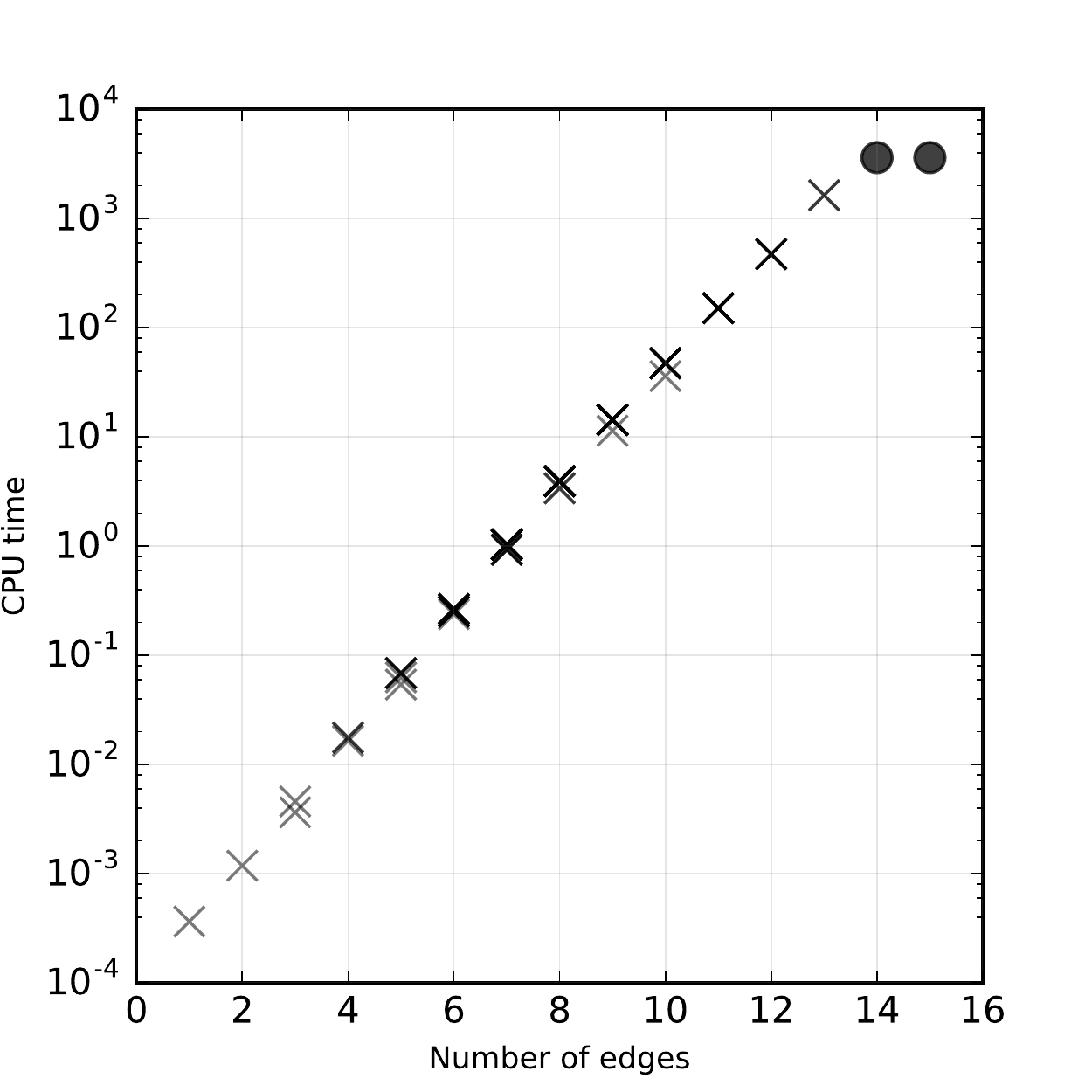}
  \caption{Average CPU time required for solving instances for all graphs up to six vertices, as a function of the number of edges in the graph.  Circles represent cases with no solution within 3600s.}
  \label{fig:cpu_graphs}
\end{figure}

\subsection{Results on an application: kidney exchange}
\label{sec:resapp}

We now turn to the usage of the algorithm proposed on its main application, \ie, in kidney exchange programs, using data provided in~\citet{Constantino2013}.  These data's graphs have been randomly generated based on characteristics of real-world kidney exchange pools; probabilities have been randomly drawn with uniform distribution~\citep{pedroso2014lncsCOA}\footnote{Instances' data are available at \texttt{http://www.dcc.fc.up.pt/\~{}jpp/code/KEP}.}.  These data are for the more general case of directed graphs $G'=(V,A)$, where exchanges may involve more than two pairs.  Undirected graphs $G=(V,E)$ are generated as mentioned above, with an edge $\{i,j\} \in E$ for each 2-cycle $(i,j)-(j,i)$ in the arc set~$A$.  For each edge $\{i,j\} \in E$ we consider its probability of failure as $p_{ij} = p'_i p'_j p'_{ij} p'_{ji}$, where $p'$ are the original probabilities for the directed graph.

Figure~\ref{fig:cpu_kep} shows the time required for solving each benchmark instance, for a number of pairs in the pool varying from 10 to 50; the actual number of vertices in the graph is typically smaller, after removing isolated vertices.  Each point corresponds to a benchmark instance, representing the CPU time required for solving the it in terms of the number of vertices and edges in the graph, for a maximum number $N$ of observations ranging from $N=0$ (no recourse, top) to $N=\infty$ (no limit, bottom).

\begin{figure}[htbp]
  \centering
  \includegraphics[width=.95\textwidth,trim=55 55 60 60,clip=True]{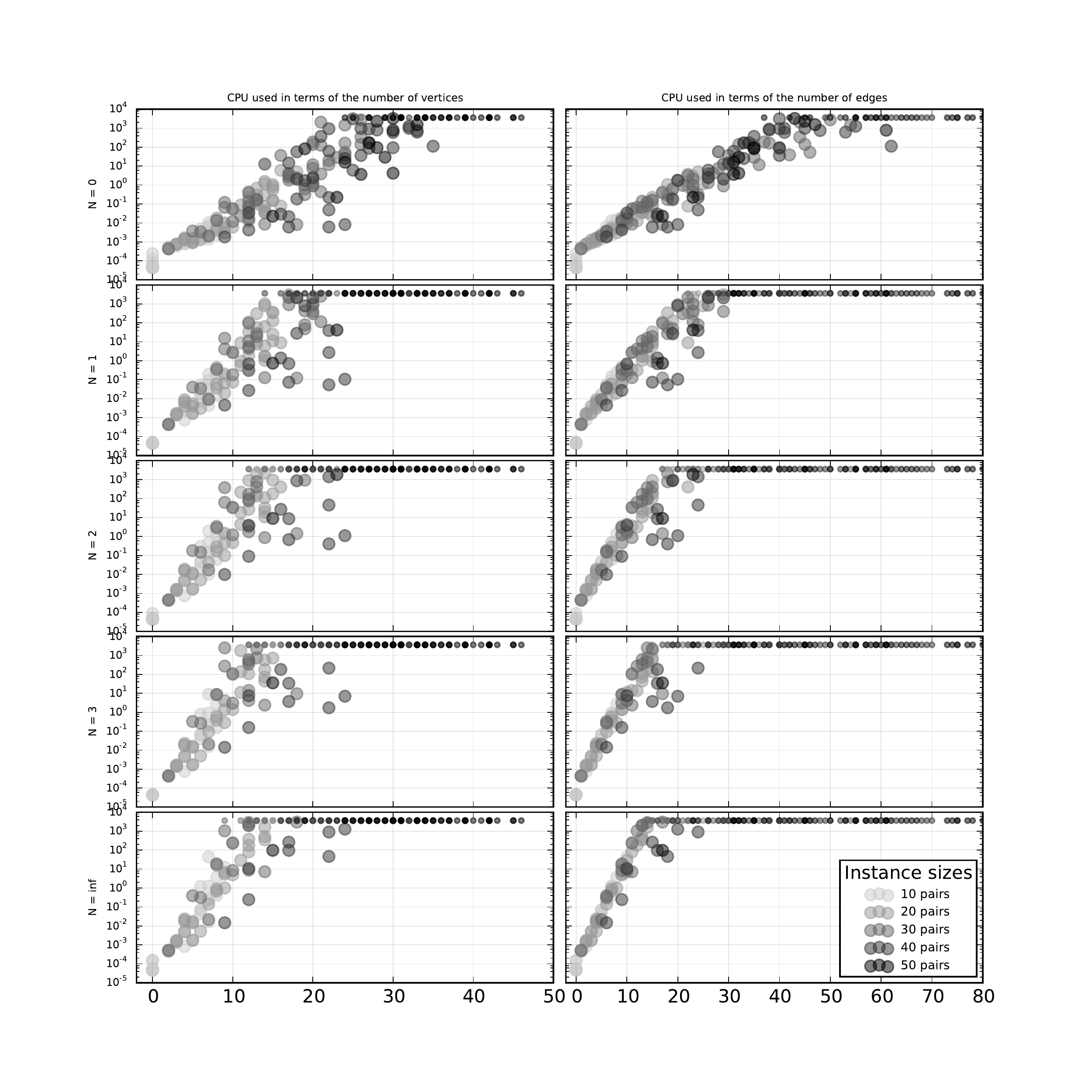}
  \caption{CPU time required for solving realistic KEP instances terms of the number of vertices (left) and edges (right) in the graph, for a maximum number of observations ranging from 0 (top) to no limit (bottom).  Smaller dots represent cases with no solution within 3600s.}
  \label{fig:cpu_kep}
\end{figure}

Table~\ref{tab:results_succ} reports the number of successes, out of 50, for each instance size.  As expected, increasing the number of allowed observations leads to a sharp raise on the CPU time required for solving an instance.  Notice that when the number of observations allowed is zero, the problem can be easily solved for much larger instances with Edmonds algorithm, or even with a general-purpose mixed-integer optimization solver; as our implementation relies on enumerating all the matchings, it is inappropriate in this case.

\begin{table}[htbp]
  \centering
  \begin{tabular}[top]{lrrrrr}
Instances	& \multicolumn{1}{c}{N=0}	& \multicolumn{1}{c}{N=1}	& \multicolumn{1}{c}{N=2}	& \multicolumn{1}{c}{N=3}	& \multicolumn{1}{c}{N=inf} \\ \hline
10 pairs	& 50	& 50	& 50	& 50	& 50 \\
20 pairs	& 50	& 46	& 37	& 32	& 26 \\
30 pairs	& 49	& 30	& 18	& 16	& 14 \\
40 pairs	& 35	& 14	& 11	& 9	& 8 \\
50 pairs	& 13	& 2	& 2	& 1	& 1 \\
  \end{tabular}
  \caption{Number of instances of each size (out of 50) successfully solved.}
  \label{tab:results_succ}
\end{table}

\section{Conclusions}
\label{sec:conclusions}

Dealing with unreliability is as issue with great importance in many applications involving optimization in graphs.  In this paper we introduced a problem that arises in kidney exchange programs, which are procedures that several countries have made available in order to provide patients with kidney failure with an alternative to traditional treatment.  In this context, failure is frequent: for example, data available show a rate of positive crossmatch (\ie, arc failure) of up to 44\%~\citep{glorie2012-25}.  Modeling failure has been addressed previously, though reconfiguration involving vertices outside a cycle has been dealt with only partially in~\citet{KPV}.  Here, we extend the model to recourse solutions involving any vertex, as long as it had not been previously matched with success.  Our model is based on successive observations of failures on proposed matchings, and on recourse solutions being proposed on the remaining graph.

The problem has been shown to be intractable, but small instance with practical relevance could be solved with our prototype implementation.  Medium to large instances could not be solved with the current implementation; however, there is room for improvement, and small enhancements will have direct impact on the applicability of the method.

As for the practical application of the method, one of the current limitations concerns the availability of data.  Indeed, to our knowledge, currently there are no available data on probabilities, and their determination is not trivial; it will likely involve the usage of machine learning tools on related historical data, collected in running exchange programs.  This implies that presently we cannot assess our model with real instances.

Notice that the memory requirements of the implementation provided would make the solution process unthinkable only a few years ago.  Hopefully, within some years the evolution in hardware and software will allow tackling larger instances.

There are several interesting future directions for research in this field: the development of approximative algorithms for tackling this problem; considering also vertex failure, in addition to edge failure, which, even though conceptually simple, by increasing the number of possibilities on components that may fail, is a real challenge on the practical solution.  Besides this extension, there are many challenges that must be overtaken for being able to solve larger instances, both on the improvement of the method and on its computational implementation.

\bibliographystyle{abbrvnat} 
\bibliography{matching} 

\end{document}